\pgfplotsset{compat=1.16}
\pgfplotsset{compat=1.16}
\newcommand*{\rom}[1]{\expandafter\@slowromancap\romannumeral #1@}
\newtheorem{lemma}{Lemma}
\newtheorem{theorem}{Theorem}
\newtheorem{definition}{Definition}
\newtheorem{remark}{Remark}
\newcommand{\ac}{algebraic connectivity}
\newcommand{\Ac}{Algebraic connectivity}
\title{\LARGE \bf
Spectral Graph Theoretic Methods for Enhancing \\ Network Robustness in Robot Localization
}
\author{Neelkamal Somisetty$^{1}$, Harsha Nagarajan$^{2,*}$, Swaroop Darbha$^{1}$
\thanks{*The authors gratefully acknowledge funding from Triad National Security LLC under the grant from the DOE NNSA (award no. 89233218CNA000001), titled ``Modeling and Discrete Optimization Algorithms for Robust Complex Networks'' and U.S. DOE's Laboratory Directed Research \& Development program under the project ``20230091ER: Learning to Accelerate Global Solutions for Non-convex Optimization''. \vspace{0.1cm}}
\thanks{$^{1}$Department of Mechanical Engineering, Texas A \& M University, College Station, TX, USA. 
Email: \{\href{mailto:neelkamal.sept18@tamu.edu}{neelkamal.sept18}, \href{mailto:dswaroop@tamu.edu}{dswaroop}\}@tamu.edu
}
\thanks{$^{2,*}$(Corresponding author) Applied Mathematics and Plasma Physics (T-5), Los Alamos National Laboratory, Los Alamos, NM, USA. Email:  \href{mailto:harsha@lanl.gov}{harsha@lanl.gov}}}
\begin{document}

\maketitle
\thispagestyle{empty}
\pagestyle{empty}

\begin{abstract}

This paper addresses the optimization of edge-weighted networks by maximizing {\ac} to enhance network robustness. Motivated by the need for precise robot position estimation in cooperative localization and pose-graph sparsification in Simultaneous Localization and Mapping (SLAM), the {\ac} maximization problem is formulated as a Mixed Integer Semi-Definite Program (MISDP), which is NP-hard. Leveraging spectral graph theoretic methods, specifically Cheeger's inequality, this work introduces novel ``\textit{Cheeger cuts}'' to strengthen and efficiently solve medium-scale MISDPs. Further, a new Mixed Integer Linear Program (MILP) is developed for efficiently computing Cheeger cuts, implemented within an outer-approximation algorithm for solving the MISDP. A greedy $k$-opt heuristic is also presented, producing high-quality solutions that serve as valid lower bounds for Cheeger cuts. Comprehensive numerical analyses demonstrate the efficacy of strengthened cuts via substantial improvements in run times on synthetic and realistic robot localization datasets. 

\end{abstract}

\section{INTRODUCTION}
\label{Sec:intro}

Localization is pivotal for the autonomy of mobile robots, enabling precise determination of sensor positions in various critical domains, including environmental monitoring, surveillance, mapping in GPS-challenged environments, object tracking, and geographic routing \cite{boukerche2007localization,patwari2005locating}. 
A prominent method in this discipline is Simultaneous Localization and Mapping (SLAM), wherein robots use exteroceptive sensors (e.g., cameras, lasers, sonars) along with motion data from interoceptive sensors (e.g., IMUs, encoders) to concurrently construct a map and determine their location \cite{cadena2016past}. In applications involving multiple robots, cooperative localization becomes essential, relying on the collaborative efforts of robots to share relative position and motion data for joint state estimation \cite{roumeliotis2002distributed, lajoie2023swarm}. This necessity becomes more pronounced in dynamic networks, where consistent connectivity is not guaranteed, requiring each robot to rely on odometry and shared relative measurements for localization \cite{leung2009decentralized}.  

In such networks, ensuring robust connectivity among communication links is crucial for effectively assessing the states of robots. Typically, robot networks operate on the premise that each robot can seamlessly exchange information to synchronize its actions with others. However, the inherent network connectivity may be compromised due to measurement errors, sensor failures, communication delays, and disturbances. {\Ac}, the second smallest eigenvalue of the Laplacian matrix, serves as a key metric for robustness and has garnered attention from both graph theoretical and engineering perspectives, underscoring its importance in enhancing network resilience \cite{jamakovic2007relationship}. {\Ac} is used as a maximization objective to sparsify pose-graphs in SLAM, addressing the escalating memory demands for observation storage and the computational burden of state estimation algorithms in long-term navigation \cite{doherty2022spectral}.

This paper addresses a specific instance of the robust network synthesis problem, \textit{which remains unresolved}, and is significant in localization applications. The challenge involves finding a subgraph with up to $q$ edges from a weighted graph to achieve the maximum possible {\ac}, known as the {\ac} maximization problem. This problem's NP-hard classification underscores its complexity \cite{mosk2008maximum}. In prior research \cite{nagarajan2012algorithms, nagarajan2015maximizing, somisetty2024optimal}, we employed a mixed-integer semi-definite program (MISDP) framework using outer-approximation (OA) and cutting planes (cuts) for obtaining optimal solutions. However, this approach has computational limitations when identifying optimal spanning trees in complete graphs exceeding ten nodes, necessitating strengthening of cuts. Leveraging spectral graph theory, this paper develops a \textit{novel} valid cut by utilizing the relationship between algebraic connectivity and the Cheeger constant \cite{cheeger1970lower} of a weighted graph, thereby enhancing the efficiency of solving the MISDP. The contributions of this paper are: 
\begin{list}{\labelitemi}{\leftmargin=0.8em \itemindent=-0.35em}
    \item Introducing valid cuts based on the Cheeger constant of a graph to strengthen MISDP bounds, leading to better run times for larger networks.
    \item  Developing a novel mixed-integer linear program (MILP) for efficient computation of the Cheeger constant in undirected weighted graphs. 
    \item Developing a greedy $k$-opt heuristic to quickly generate high-quality solutions for the pose-graph SLAM sparsification problem, involving benchmark datasets such as \texttt{Intel} with 1728 nodes and 400 loop-closure edges.
\end{list}

The paper is organized as follows: Section \ref{Sec:lit} reviews recent findings on maximizing {\ac} in robot localization. Section \ref{sec:formulation} presents the mathematical framework as an MISDP with connectivity constraints. Section \ref{sec:cheeger} introduces a novel Cheeger constant-based cut formulation and an algorithm using these cuts. Section \ref{sec:heuristic} discusses a greedy heuristic for large-scale networks. Sections \ref{sec:comp} and \ref{sec:conc} present numerical results and conclusions, respectively.

\section{RELATED WORK}
\label{Sec:lit}
\subsection{Importance of the {\ac} in localization}

The significance of {\ac} spans various domains, particularly in localization and perception, where it is crucial for estimation accuracy in cooperative localization \cite{sharma2014observability, nagarajan2015maximizing}, rotation averaging \cite{boumal2014cramer}, linear SLAM \cite{ khosoussi2019reliable}, and pose-graph SLAM \cite{chen2021cramer}. In cooperative vehicle localization with noisy measurements, the state estimation error of the covariance matrix is inversely related to the norm of the Laplacian matrix, suggesting that higher {\ac} leads to reduced estimation errors \cite{somisetty2024optimal}. Similarly, \cite{boumal2014cramer} observed that {\ac} inversely bounds the Cramér-Rao lower bound on the expected mean squared error for rotation averaging. Recent studies have further emphasized its role in minimizing worst-case errors of estimators in measurement graphs for pose-graph SLAM and rotation averaging, linking higher {\ac} to statistically reduced error \cite{doherty2022performance}.

In applications like cooperative vehicle localization with noisy measurements \cite{somisetty2024optimal} and pose-graph sparsification for SLAM \cite{doherty2022spectral}, the central problem involves selecting $q$ edges from a predetermined set to form a subgraph with the highest possible {\ac}. Here, the edge weights, derived from relative measurements or poses, are non-negative and reflect the reliability of the corresponding information.

\subsection{Maximizing the {\ac}}

The problem of maximizing the {\ac} subject to cardinality constraints has been explored in various applications. In \cite{ghosh_growing_2006}, authors considered a binary relaxation of the {\ac} maximization problem, resulting in weak upper bounds. For sparsifying pose-graphs in SLAM, the Frank-Wolfe method was implemented on the continuous semi-definite program (SDP) to identify networks with higher {\ac} \cite{doherty2022spectral, nam2023spectral}. 
However, these methods primarily yield heuristic solutions without optimality guarantees. The literature also includes various heuristic strategies \cite{trimble2019connectivity,nagarajan2014heuristics, son2010building} aimed at generating sub-optimal solutions for  {\ac} maximization. Despite these advancements, there exists a gap in methodologies developed to obtain optimal solutions for maximizing {\ac} in weighted networks.

\section{MATHEMATICAL FORMULATION}
\label{sec:formulation}
This section outlines the mathematical framework for {\ac} maximization, focusing on pose-graph SLAM and cooperative localization applications.

\textbf{Notation.} In subsequent sections, scalars (vector/matrix entries) are denoted by lower and upper case letters, while vectors and matrices are represented in bold font for lower and upper cases, respectively. The tensor product of two vectors $\mathbf{v_1, v_2}$ in the same vector space is expressed as $\mathbf{v_1 \otimes v_2}$, and their scalar (dot) product is denoted by $\mathbf{v_1 \cdot v_2}$. For any two square symmetric matrices $\mathbf{A}$ and $\mathbf{B}$, the notation $\mathbf{A} \succeq \mathbf{B}$ signifies that $\mathbf{A-B}$ is positive semi-definite (PSD), i.e., $\mathbf{A-B} \succeq 0$. Here, $\mathbf{e_i}$ represents the $i^{th}$ column of the identity matrix $\mathbf{I}_n$, where $n \times n$ denotes the matrix size. Additionally, $\mathbf{1}$ denotes an $n$-dimensional vector with all components equal to $1$. For any non-empty set $S$, $|S|$ indicates its cardinality.

Let $ (V, E, \mathbf{w}) $ denote a weighted graph or network with at most one edge between any pair of nodes and no self-loops. The number of nodes is $ n = |V| $. Each edge $\{i,j\} \in E$ has a positive weight $ w_{ij} > 0 $ and a binary decision variable $ x_{ij} \in \{0, 1\}$, where $ x_{ij} = 1 $ indicates the edge is selected, and $ x_{ij} = 0 $ indicates its exclusion. The matrix $ \mathbf{x} $ consists of the decision variables $ x_{ij} $. For a subset $ S \subset V $, $ \delta(S) $ represents the set of edges in the cut-set of $ S $, defined as $ \delta(S) = \{\{i,j\} \in E \mid i \in S, \ j \in V \setminus S \} $, where $ V \setminus S $ is the complement of $ S $ in the vertex set $ V $.

The Laplacian matrix of the weighted graph is given by:
$$\mathbf{L(x)} = \sum_{\{i,j\} \in E} x_{ij} w_{ij}(\mathbf{e}_i - \mathbf{e}_j) \otimes (\mathbf{e}_i - \mathbf{e}_j).$$

Note that $\mathbf{L(x)}$ is a symmetric PSD matrix for a given network $\mathbf{x}$. Let $\lambda_1 ( = 0) \leqslant \lambda_2 \leqslant \lambda_3 \leqslant \ldots \leqslant \lambda_n$ be the eigenvalues of $\mathbf{L(x)}$ and $\mathbf{v}_1, \mathbf{v}_2, \ldots, \mathbf{v}_n$ be the respective eigenvectors, where $\lambda_2$ and $\mathbf{v}_2$ are known as the {\ac} and Fiedler vector, respectively.

The basic problem ($\mathcal{BP}$) is formulated as follows:
\begin{equation}
		\begin{array}{lll}
	  		\text{($\mathcal{BP}$)}  &\gamma^* = &\max  \lambda_2(\mathbf{L(x)}), \\
			&\text{s.t.} & \sum_{\{i,j\} \in E} x_{ij} \leqslant q, \\
			& & \mathbf{x} \in \{0, 1\}^{|E|},
		\end{array}
		\label{eq:BP}
\end{equation}
where $q$ is a prespecified positive integer serving as the upper limit on the number of edges selected. Given this program's non-linear and binary nature, it is crucial to reformulate it into a more computationally feasible structure. The subsequent discussion presents an equivalent MISDP formulation for $\mathcal{BP}$.

\subsection{Mixed Integer Semi-Definite Program}
\label{Subsec:F1}

In the following discussion, we explore two variations of $\mathcal{BP}$. The first is a broad-based approach that augments a base graph to achieve the desired number of edges. This is particularly relevant for sparsifying pose-graphs in SLAM, where the base graph is often a path graph, representing relative pose measurements from encoders. Augmentation involves adding loop closure edges, symbolizing relative pose measurements from exteroceptive sensors.

The second variant considers weighted spanning trees as feasible solutions, with the optimal solution ($\mathbf{x}^*$) being a spanning tree with maximum {\ac}. Spanning trees, representing minimally connected networks, are pertinent to cooperative robot localization. Unlike the first variant, this formulation aims to construct a graph with $n-1$ edges from an empty base graph. However, the algorithms proposed can be adapted to various network configurations.

\textcolor{black}{To simplify the remaining sections, we introduce a lifted PSD matrix, \( \mathbf{W} = \mathbf{L(x)} - \gamma (\mathbf{I}_n - \mathbf{e}_0 \otimes \mathbf{e}_0 ) \), where \( \mathbf{e}_0 = \frac{1}{\sqrt{n}}\mathbf{1} \), such that \( \lVert \mathbf{e}_0 \rVert_2 = 1 \). We then reformulate the \( \mathcal{BP} \) into the following MISDP for the first variant:} 
\vspace{1.6cm}
\begin{subequations}
\begin{align}
    \gamma^* = & \ \max  \ \gamma, \label{eq:gamma_obj}\\  \text{s.t.} \ &  \mathbf{W^a} + \mathbf{W^b}  \succeq 0, \label{eq:W_psd_}\\
		& W_{ii}^a = \sum_{\{i,j\} \in E^a} w_{ij}x_{ij} - \frac{\gamma(n-1)}{n}, \ \forall  i \in V, \label{eq:W_ii_a} \\
		& W_{ij}^a = W_{ji}^a = -w_{ij}x_{ij} + \frac{\gamma}{n}, \ \forall  \{i,j\} \in E^a, \label{eq:W_ij_a} \\
        & W_{ii}^b = \sum_{\{i,j\} \in E^b} w_{ij}, \ \forall  i \in V, \label{eq:W_ii_b} \\
        & W_{ij}^b = W_{ji}^b = -w_{ij}, \ \forall  \{i,j\} \in E^b, \label{eq:W_ij_b} \\
        &\sum_{\{i,j\} \in E^a} x_{ij} \leqslant q, \quad \mathbf{x} \in \{0, 1\}^{|E^a|}.
\end{align}
\label{eq:F_1}
\end{subequations}
In the above formulation, the objective \eqref{eq:gamma_obj} through constraint \eqref{eq:W_ij_b} reflects the {\ac} of the network. The validity of this formulation can be found in \cite{nagarajan2012algorithms}. Here, \(E^a\) denotes the set of edges available for augmentation, while \(E^b\) corresponds to the edges already existing within the base graph. For the variant focusing on spanning trees, we set \(q = n-1\) in \(\mathcal{BP}\), indicating that the ideal network configuration is a sub-graph with \(n-1\) edges, making each network a spanning tree. Consequently, \({\mathcal T}\) represents the exhaustive set of \(n^{n-2}\) possible spanning trees. Since this variant lacks an initial base graph, \(E^a\) is equivalent to \(E\), encompassing all potential edges. The resulting MISDP formulation is as follows:
\begin{subequations}
\begin{align}
    \text{(${\mathcal F}_1$)} \quad \gamma^* = & \ \max  \ \gamma, \\  \text{s.t.} \ &  \mathbf{W}  \succeq 0, \label{eq:W_psd}\\
    & \  \text{Constraints} \  \eqref{eq:W_ii_a}, \eqref{eq:W_ij_a}, \\
		& \mathbf{x} \in \mathcal T \label{eq:xinT}.
\end{align}
\label{eq:F_2}
\end{subequations}
Solving the network design problem (\eqref{eq:F_1} or \eqref{eq:F_2}) using general-purpose MISDP solvers is extremely challenging due to its computational complexity \cite{mosk2008maximum}. In our previous study, we introduced a simplified approach utilizing a cutting plane-based outer-approximation procedure. For a comprehensive explanation of this approach, please refer to \cite{somisetty2024optimal}.

\section{CHEEGER CONSTANT-BASED VALID CUTS}
\label{sec:cheeger}
This section present a novel MILP formulation to compute the Cheeger constant or the Isoperimetric number of a given weighted graph. Subsequently, an algorithm is introduced to find a valid Cheeger cut informed by the well-known Cheeger's inequality generalized for weighted graphs\cite{cheeger1970lower}. These Cheeger cuts, in conjunction with eigenvector cuts, are utilized for the accelerated iterative refinement of an outer-approximation (OA) of the MISDP in ${\mathcal F}_1$ in \eqref{eq:F_2}.

\subsection{MILP formulation for Cheeger constant evaluation}
\label{subsec:milp_cheeger}
\begin{definition}
\label{def1}
Let $G = (V, E, \mathbf{w})$ be an undirected weighted graph. The Cheeger ratio for a vertex set $S$ is 
\begin{align}
    \phi(S) = \frac{\sum_{\{i,j\} \in \delta(S)} w_{ij}}{|S|}.
    \label{eq:cheeger_ratio}
\end{align}
\end{definition}
\vspace{0.2cm}
\begin{definition}
\label{def2}   
For any undirected weighted graph $G = (V, E, \mathbf{w})$, the Cheeger constant, also known as the Isoperimetric number or the edge expansion, represents the sparsest normalized cut-set of the graph and is given by:
\begin{align}
    \phi(G) = \min_{S: \ 1\leqslant |S|\leqslant \frac{|V|}{2}} \phi(S).
    \label{eq:cheeger}
\end{align}
\end{definition}
\vspace{1em}
Note that the Cheeger constant for a weighted graph can also be defined in terms of its \textit{conductance} based on the volume of $S$ \cite{arora2009expander}. However, we adhere to the definition given in \eqref{eq:cheeger}.

Computing $\phi(G)$ in \eqref{eq:cheeger} or the associated cut-set is NP-hard. We propose the following MI quadratically-constrained problem (MIQCP) to evaluate $\phi(G)$ (see \cite{trevisan2013lecture}). To represent a cut $\delta(S)$ that partitions the graph, we introduce a binary vector $\mathbf{z} \in \{0,1\}^n$, where $z_i = 1$ if and only if $i \in S$; otherwise $i \in V\setminus S$. The MIQCP can be written as follows: 
\begin{subequations}
    \begin{align}
    \phi(G) = & \ \min  \ \phi, \\
    \text{s.t.} \ & \phi (\mathbf{1}\cdot \mathbf{z}) \geqslant \sum_{\{i,j\} \in E} w_{ij} (z_i - z_j)^2, \label{eq:r-soc} \\ 
    & 1 \leqslant \mathbf{1}\cdot \mathbf{z} \leqslant \left\lfloor \frac{n}{2} \right\rfloor, \quad \mathbf{z} \in  \{0,1\}^n.
    \end{align}
    \label{eq:miqcp_cheeger}
\end{subequations}
Note that \eqref{eq:miqcp_cheeger} is a mixed-integer convex program with a convex rotated second-order conic constraint in \eqref{eq:r-soc}. This MIQCP can be solved using state-of-the-art solvers like Gurobi \cite{gurobi}; however, such solvers often do not converge or require large run times even on small graphs. Moreover, since $\phi(G)$ is evaluated repeatedly within the branch-and-cut framework of Algorithm \ref{alg:cheeger}, we propose an exact MILP reformulation that achieves faster run times on larger graphs: 
\begin{subequations}
    \begin{align}
    \phi(G) = & \ \min  \ \phi, \\
    \text{s.t.} \ \ & \mathbf{1}\cdot \widehat{\boldsymbol{\phi} \mathbf{z}} \geqslant \sum_{\{i,j\} \in E} w_{ij} (z_i + z_j - 2\widehat{Z}_{ij}), \label{eq:rsoc_linear} \\ 
    & \widehat{Z}_{ij} \leqslant \min\{z_i, z_j\}, \ \forall \{i,j\} \in E, \label{eq:Z_lin_1}\\ 
    & \widehat{Z}_{ij} \geqslant \textcolor{black}{\max\{0, z_i+z_j-1\}}, \ \forall \{i,j\} \in E, \label{eq:Z_lin_2}\\ 
    & \widehat{\phi z}_i \geqslant \max \{0, \ (\phi + \overline{\phi} \cdot z_i - \overline{\phi}) \}, \ \forall i \in V \label{eq:phi_z_1}\\ 
    & \widehat{\phi z}_i \leqslant \min \{\phi, \ \overline{\phi} \cdot z_i \}, \ \forall i \in V \label{eq:phi_z_2} \\ 
    & 1 \leqslant \mathbf{1}\cdot \mathbf{z} \leqslant \left\lfloor \frac{n}{2} \right\rfloor, \quad \mathbf{z} \in  \{0,1\}^n. \label{eq:sum_z}
    \end{align}
    \label{eq:milp_cheeger}
\end{subequations}
where $\overline{\phi}$ is any valid upper bound on the Cheeger constant $\phi$ of the graph. Constraint \eqref{eq:rsoc_linear} is derived by introducing auxiliary variables $\widehat{\phi z}_i$ and $\widehat{Z}_{ij}$, representing the products $\phi z_i$ and $z_i z_j$, respectively. Constraints \eqref{eq:Z_lin_1} and \eqref{eq:Z_lin_2} linearize the binary product $z_i z_j$ over the set of binary points with its convex hull, while constraints \eqref{eq:phi_z_1} and \eqref{eq:phi_z_2} represent the McCormick linearization of the product of the continuous variable $\phi$ and the binary variable $z_i$. These linearizations are exact, ensuring that the optimal solution to these constraints is also optimal for formulation \eqref{eq:miqcp_cheeger}. Finally, constraint \eqref{eq:sum_z} ensures that the graph is partitioned into $S$ and $V \setminus S$, with $S$ containing at least one node but not exceeding half the total number of nodes in the graph.
\begin{figure}
    \centering
    \includegraphics[width=\linewidth]{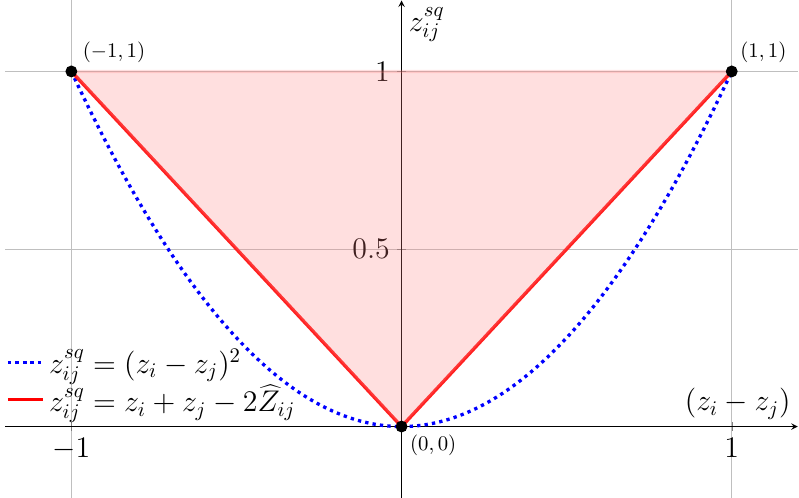}
    \caption{Convex hull as the feasible region (in red) for every line $\{i,j\} \in E$.}
    \label{fig:convex_hull}
\end{figure}

\begin{remark}
\label{rem:milp_region}
For each line \(\{i,j\}\in E\), with \(z_i,z_j \in \{0,1\}\), observe that the feasible region of \((z_i-z_j)^2\) in it's epigraph form in constraint \eqref{eq:r-soc} is the region above the dotted blue line in Fig. \ref{fig:convex_hull}\textcolor{black}{, labeled as \(z^{sq}_{ij}\)}. In contrast, the feasible region induced by $(z_i+z_j-2\widehat{Z}_{ij})$'s epigraph in constraint \eqref{eq:rsoc_linear} forms the convex hull within the shaded red region, \textcolor{black}{also labeled as \(z^{sq}_{ij}\),} a proper subset of the former region. This property, coupled with the summation of such terms over \(|E|\) edges, partially explains why the MILP in \eqref{eq:milp_cheeger} runs significantly faster on dense graphs. (see \ref{subsec:milp_cheeger_perf})
\end{remark}

\subsection{Implementation of Cheeger cuts for solving MISDP in \eqref{eq:F_2}}
\begin{theorem}
\label{theorem1} (Cheeger's Inequality \cite{cheeger1970lower}). Let $G = (V, E, \mathbf{w})$ be a weighted undirected graph and let $\lambda_2$ be the {\ac} of graph $G$. Then 
\begin{align*}
    c_f \cdot \lambda_2(G) \leqslant \phi(G) \leqslant \sqrt{2\lambda_2(G)}, \quad \text{where} \ c_f = 1/2.
\end{align*}
\end{theorem}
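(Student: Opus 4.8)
The plan is to prove the two inequalities separately, both anchored on the variational (Rayleigh--quotient) characterization of the algebraic connectivity,
\[ \lambda_2(G) = \min_{\mathbf{v}\,\perp\,\mathbf{1},\ \mathbf{v}\neq\mathbf{0}}\ \frac{\sum_{\{i,j\}\in E} w_{ij}(v_i-v_j)^2}{\sum_{i\in V} v_i^2}, \]
which follows from the Courant--Fischer theorem applied to $\mathbf{L}$ together with $\mathbf{L}\mathbf{1}=\mathbf{0}$. Writing the cut weight as $w(\delta(S)):=\sum_{\{i,j\}\in\delta(S)}w_{ij}$, the lower bound $\tfrac12\lambda_2\le\phi(G)$ is the ``easy'' direction, obtained by exhibiting a good test vector, whereas the upper bound $\phi(G)\le\sqrt{2\lambda_2}$ is the ``hard'' direction, which requires rounding the Fiedler vector into an actual vertex cut.

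For the lower bound I would fix any $S$ with $1\le|S|\le n/2$ and substitute the recentered indicator $\mathbf{v}=\mathbf{1}_S-\tfrac{|S|}{n}\mathbf{1}$ into the Rayleigh quotient. Since $\mathbf{L}\mathbf{1}=\mathbf{0}$, the numerator collapses to $\mathbf{1}_S^{\top}\mathbf{L}\,\mathbf{1}_S=w(\delta(S))$, while the denominator evaluates to $|S|\,(1-|S|/n)$. The constraint $|S|\le n/2$ forces $1-|S|/n\ge\tfrac12$, hence $\lambda_2\le 2\,w(\delta(S))/|S|=2\phi(S)$; minimizing over admissible $S$ gives $\lambda_2\le 2\phi(G)$, i.e. the constant $c_f=\tfrac12$.

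For the upper bound I would start from the Fiedler vector $\mathbf{f}$, translate it by its median (which leaves the numerator unchanged and only enlarges the denominator, so the Rayleigh quotient stays $\le\lambda_2$), and keep whichever of its positive or negative parts has the smaller Rayleigh quotient, yielding a nonnegative vector $\mathbf{g}$ supported on at most $n/2$ vertices. I would then run a \emph{sweep}: order the support by decreasing $g$-value and consider the threshold sets $S_t=\{i:g_i^2\ge t\}$, each of size $\le n/2$, so that $w(\delta(S_t))\ge\phi(G)\,|S_t|$ by Definition~\ref{def2}. Integrating over $t\ge 0$ and using the coarea identities $\int_0^\infty w(\delta(S_t))\,dt=\sum_{\{i,j\}\in E}w_{ij}\,|g_i^2-g_j^2|$ and $\int_0^\infty|S_t|\,dt=\sum_i g_i^2$ gives $\phi(G)\le \sum_{\{i,j\}}w_{ij}|g_i-g_j|\,|g_i+g_j|\big/\sum_i g_i^2$. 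Applying Cauchy--Schwarz to separate $|g_i-g_j|$ from $|g_i+g_j|$ and then $(g_i+g_j)^2\le 2(g_i^2+g_j^2)$ bounds the right-hand side by $\sqrt{2}\cdot\sqrt{\sum_{\{i,j\}}w_{ij}(g_i-g_j)^2\big/\sum_i g_i^2}\le\sqrt{2\lambda_2}$, closing the estimate.

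I expect the sweep to be the main obstacle: the passage from the real-valued Fiedler vector to a genuine vertex cut $\delta(S)$, together with the coarea bookkeeping, is the crux of Cheeger's inequality, and it is precisely the Cauchy--Schwarz step combined with $(a+b)^2\le 2(a^2+b^2)$ that manufactures both the square root and the constant $\sqrt{2}$. A delicate point I would flag explicitly is the normalization: with the size-based ratio $\phi(S)=w(\delta(S))/|S|$ and the unnormalized Laplacian, the inequality $(g_i+g_j)^2\le 2(g_i^2+g_j^2)$ naturally produces a weighted-degree term $\sum_i d_i g_i^2$, so recovering exactly $\sqrt{2\lambda_2}$ (rather than a bound carrying a maximum-degree factor $\sqrt{2d_{\max}\lambda_2}$) hinges on the weight normalization tacitly assumed in Definition~\ref{def2}; I would state that assumption before invoking the clean constant.
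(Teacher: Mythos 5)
The paper does not actually prove Theorem~\ref{theorem1}; it is stated as a classical result with a citation to Cheeger's work, and only the left-hand inequality is ever used downstream (in Lemma~\ref{lem:1} and Algorithm~\ref{alg:cheeger}). So your attempt cannot be compared to an in-paper argument, only judged on its own. Your lower-bound argument is correct and complete in outline: the recentered indicator $\mathbf{1}_S-\tfrac{|S|}{n}\mathbf{1}$ is orthogonal to $\mathbf{1}$, the numerator is $w(\delta(S))$, the denominator is $|S|(1-|S|/n)\geqslant |S|/2$, and minimizing over admissible $S$ gives $\tfrac12\lambda_2\leqslant\phi(G)$. This is the standard test-vector proof and it is the only direction the paper's Cheeger cuts rely on.

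Your upper-bound sketch is the standard median-shift-and-sweep argument, and the caveat you flag at the end is not a cosmetic worry --- it is a genuine discrepancy with the theorem as the paper states it. With Definition~\ref{def2}'s vertex-count normalization $\phi(S)=w(\delta(S))/|S|$ and the unnormalized weighted Laplacian, the step $(g_i+g_j)^2\leqslant 2(g_i^2+g_j^2)$ unavoidably produces $\sum_i d_i g_i^2$ with $d_i$ the weighted degree, so the sweep yields $\phi(G)\leqslant\sqrt{2\,d_{\max}\lambda_2}$, not $\sqrt{2\lambda_2}$. Indeed the stated upper bound fails on $K_2$ with a single edge of weight $w$: there $\lambda_2=2w$ and $\phi(G)=w$, yet $w>\sqrt{4w}$ for $w>4$. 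The clean constant $\sqrt{2}$ requires either the conductance (volume) normalization with the normalized Laplacian, or a bound $d_i\leqslant 1$ on weighted degrees; your proof cannot and should not be expected to recover the paper's literal statement without one of these hypotheses. Since the paper only invokes the lower bound, this does not affect the validity of the Cheeger cuts, but you are right that the upper bound as written needs the normalization made explicit.
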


Cheeger's inequality is a fundamental result in spectral graph theory that relates the {\ac} $\lambda_2(G)$ of a graph  $G$ to its Cheeger constant $\phi(G)$ (from section \ref{subsec:milp_cheeger}).  Qualitatively, this implies that $\lambda_2(G)$ is large if and only if $\phi(G)$ is large. Leveraging this theorem, we propose the novel \textit{Cheeger cut} based on the left side of the inequality that is valid for the optimal solution of the MISDP in \eqref{eq:F_2}.

\begin{lemma}
    \label{lem:1} (\textbf{Cheeger Cut}) \textit{Let $\widehat{G}$ be the best-known feasible solution for the MISDP in \eqref{eq:F_2}, with $\lambda_2(\widehat{G})$ representing its algebraic connectivity, serving as a lower bound to $\gamma^*$. Let $c_f > 0$ be a constant, referred to as the Cheeger factor. For any other feasible graph $\tilde{G}$ in the MISDP, let $\tilde{S}$ and $\tilde{V} \setminus \tilde{S}$ denote the partitions of the graph corresponding to $\phi(\tilde{G})$. If $\phi(\tilde{G}) < c_f \cdot \lambda_2(\widehat{G})$, then the following cut is valid for the MISDP's optimal solution $(G^*, \gamma^*)$}:
    $$\sum_{\{i,j\} \in \delta(\tilde{S})} w_{ij} x_{ij} \geqslant c_f \cdot \lambda_2(\widehat{G}) \cdot |\tilde{S}|$$
\end{lemma}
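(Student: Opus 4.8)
The plan is to prove validity by applying Cheeger's inequality (Theorem~\ref{theorem1}) not to the candidate graph $\tilde{G}$ we wish to exclude, but to the \emph{unknown} optimal graph $G^*$ of the MISDP in \eqref{eq:F_2}. The guiding observation is that the partition $\tilde{S}$ used to build the cut is an \emph{admissible} subset in the sense of Definition~\ref{def2} (it attains $\phi(\tilde{G})$, so it respects $1 \leqslant |\tilde{S}| \leqslant \lfloor |V|/2 \rfloor$). Consequently, the Cheeger constant of any feasible graph evaluated against $\tilde{S}$ can only be larger than that graph's minimum Cheeger ratio. Validity then follows by bounding $\phi(G^*)$ from below through the optimal algebraic connectivity.

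First I would record the chain linking the incumbent bound to the optimum. Since $\widehat{G}$ is feasible for \eqref{eq:F_2}, its algebraic connectivity is a lower bound on the optimal value, giving $\lambda_2(\widehat{G}) \leqslant \gamma^* = \lambda_2(G^*)$. Next I would invoke the \emph{left} half of Theorem~\ref{theorem1} on $G^*$ and concatenate, obtaining
$$\phi(G^*) \;\geqslant\; c_f \cdot \lambda_2(G^*) \;\geqslant\; c_f \cdot \lambda_2(\widehat{G}).$$
Then, unfolding the definition of $\phi(G^*)$ as a minimum over admissible subsets and specializing to the admissible subset $\tilde{S}$, the Cheeger ratio of $G^*$ at $\tilde{S}$ cannot be smaller than $\phi(G^*)$:
$$\frac{\sum_{\{i,j\} \in \delta(\tilde{S})} w_{ij}\, x^*_{ij}}{|\tilde{S}|} \;\geqslant\; \phi(G^*) \;\geqslant\; c_f \cdot \lambda_2(\widehat{G}).$$
Clearing the denominator $|\tilde{S}| > 0$ reproduces the claimed inequality with $\mathbf{x} = \mathbf{x}^*$, establishing that the cut holds at $(G^*, \gamma^*)$. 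I would also note the role of the hypothesis $\phi(\tilde{G}) < c_f \cdot \lambda_2(\widehat{G})$: it is precisely what makes the cut \emph{separating}, since evaluating the left side at $\tilde{G}$ gives $\sum_{\{i,j\} \in \delta(\tilde{S})} w_{ij}\, \tilde{x}_{ij} = \phi(\tilde{G}) \cdot |\tilde{S}| < c_f \cdot \lambda_2(\widehat{G}) \cdot |\tilde{S}|$; it plays no part in validity itself, only in ensuring the cut excludes $\tilde{G}$.

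The main subtlety — and the only place requiring care — is the direction in which Cheeger's inequality is deployed. It is tempting to apply it to $\tilde{G}$, but that yields nothing useful; the argument hinges on the fact that the optimal graph $G^*$ has algebraic connectivity \emph{at least} $\lambda_2(\widehat{G})$ and hence, by the lower Cheeger bound, a correspondingly large Cheeger constant, which forces \emph{every} admissible cut of $G^*$ — in particular the one indexed by $\tilde{S}$ — to carry weight at least $c_f \cdot \lambda_2(\widehat{G}) \cdot |\tilde{S}|$. A secondary point to state explicitly is that unselected edges contribute zero to the cut sum (their $x^*_{ij} = 0$), so the weighted cut-set appearing in the Cheeger ratio of $G^*$ coincides with $\sum_{\{i,j\} \in \delta(\tilde{S})} w_{ij}\, x^*_{ij}$, matching the form of the cut exactly.
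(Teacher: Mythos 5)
Your proof is correct and follows essentially the same route as the paper: apply the left half of Cheeger's inequality to the optimal graph $G^*$, chain through $\lambda_2(\widehat{G}) \leqslant \lambda_2(G^*)$, and conclude that the weighted cut-set of $G^*$ across $\tilde{S}$ is at least $c_f \cdot \lambda_2(\widehat{G}) \cdot |\tilde{S}|$. If anything, your write-up is more precise than the paper's, since you make explicit the step the paper glosses over — that $\tilde{S}$ is an admissible subset in Definition~\ref{def2}, so the Cheeger ratio of $G^*$ at $\tilde{S}$ dominates $\phi(G^*)$ — and you correctly identify that the hypothesis $\phi(\tilde{G}) < c_f \cdot \lambda_2(\widehat{G})$ is needed only for the cut to be separating, not for its validity.
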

\begin{proof}
        By Cheeger's inequality, we know that $\phi(G) \geqslant c_f \cdot \lambda_2(G)$ for any feasible graph $G$, where $c_f = 0.5$. In particular, for the optimal graph $G^*$, it holds that $\phi(G^*) \geqslant c_f \cdot \lambda_2(G^*)$. Let $\widehat{G}$ be a sub-optimal, best-known feasible graph for the MISDP. Since $\lambda_2(\widehat{G}) \leq \lambda_2(G^*)$, we also have $\phi(G^*) \geq c_f \cdot \lambda_2(\widehat{G})$. Now, for any sub-optimal graph $\tilde{G}$ obtained within the OA algorithm \ref{alg:cheeger}, if $\phi(\tilde{G}) < c_f \cdot \lambda_2(\widehat{G})$, then the following inequality serves as a valid cut for the optimal graph $G^*$: $\phi(\tilde{G}) \geqslant c_f \cdot \lambda_2(\widehat{G})$, where $\phi(\tilde{G}) = \frac{\sum_{\{i,j\} \in \delta(\tilde{S})} w_{ij} x_{ij}}{|\tilde{S}|}$ from \eqref{eq:cheeger_ratio}. This cut ensures that any sub-optimal $\tilde{G}$ is separated based on its Cheeger constant relative to $\lambda_2(\widehat{G})$, improving the convergence toward $G^*$, thus proving the lemma's claim.
\end{proof}

While a Cheeger factor ($c_f$) of 0.5 is applicable based on Cheeger's inequality, the Cheeger cut in Lemma \eqref{lem:1} remains valid for any $c_f > 0.5$ as long as it satisfies the inequality for the family of weighted graphs under consideration. Empirically, we observe that $c_f > 0.5$ maintains the validity of Cheeger cuts and improves convergence efficiency of Algorithm \ref{alg:cheeger} (see section \ref{subsec:alg_perf}). Establishing theoretically tighter constant-factor bounds on $c_f$ is an open problem and a topic of independent research interest.

Algorithm \ref{alg:cheeger} iteratively refines the outer-approximation (OA) of the MISDP by incorporating both Cheeger and eigenvector cuts \cite{nagarajan2015maximizing, somisetty2024optimal,bhela2021efficient}. The OA procedure solves a sequence of  MILPs, representing polyhedral relaxations of the PSD constraint \eqref{eq:W_psd} in \(\mathcal{F}_1\), thus, significantly improving computational performance. Algorithm \ref{alg:cheeger} solves the MISDP by progressively narrowing the gap between the upper and lower bounds to reach an optimal solution.

\begin{algorithm}
\caption{Proposed Outer-Approximation Algorithm}
\label{alg:cheeger} 
{
\begin{algorithmic}[1]

\State \textbf{Input}: Graph $(V,E,\mathbf{w})$, Cheeger factor $c_f$, Relative optimality tolerance $\varepsilon_{opt} > 0$.

\vspace*{0.05in}
\State \textbf{Initialization}: $LB = 0$, $UB = \infty$, $G^* = \text{null}$, \textcolor{black}{${\lambda_2}(\widehat{G}) =$ Algebraic connectivity of the best-known solution of $\mathcal{F}_1$}, $\mathcal{F}^{u} =$ MILP by dropping the PSD constraint \eqref{eq:W_psd} in $\mathcal{F}_1$. 

\vspace*{0.05in}
\State Solve $\mathcal{F}^{u}$. Let its optimal solution be \textcolor{black}{$({\gamma^u}, \tilde{\mathbf{W}}, \tilde{G})$.}

\vspace*{0.05in}
\State \textcolor{black}{$G^* \longleftarrow \tilde{G}$, \quad $LB \longleftarrow \lambda_2(\tilde{G})$, \quad $UB \longleftarrow {\gamma^u}$}

\vspace*{0.05in}
\State \textcolor{black}{For $\tilde{G}$, evaluate the Cheeger constant $\phi(\tilde{G})$ and the corresponding cut-set $\delta(\tilde{S})$ using the MILP in \eqref{eq:milp_cheeger}.}

\vspace*{0.05in}
\While{$\frac{UB-LB}{UB + 10^{-6}} > \varepsilon_{opt}$}

\vspace*{0.05in}
\State Update $\mathcal{F}^{u}$ with the following linear constraints: 
$$\mathbf{v^- \cdot {W} v^-} \geqslant 0,$$
where $\mathbf{v^-}$ is the violated eigenvector corresponding to the smallest (negative) eigenvalue of $\tilde{\mathbf{W}}$.

\vspace*{0.05in}
\If{$\textcolor{black}{\phi(\tilde{G}) < c_f \cdot \lambda_2(\widehat{G})}$}

\vspace*{0.05in}
\State $\textcolor{black}{\sum_{\{i,j\} \in \delta(\tilde{S})} w_{ij} x_{ij} \geqslant c_f \cdot \lambda_2(\widehat{G}) \cdot |
\tilde{S}|,}$

\EndIf

\vspace*{0.05in}
\State Solve $\mathcal{F}^{u}$. Update $LB$, $UB$ and ${G^*}$.

\EndWhile

\vspace*{0.05in}
\State Return ${G}^*$, lower bound $LB$ and upper bound $UB$.

\end{algorithmic}
}
\end{algorithm}

\section{GREEDY $k$-OPT HEURISTIC}
\label{sec:heuristic}
We now present a greedy heuristic for the {\ac} maximization problem, structured into two stages:
\begin{itemize}
    \item \textbf{Initial feasible graph construction:} In this phase, an initial graph is constructed to meet the required number of edges,  providing a preliminary feasible solution for the MISDP. A pseudo-code of this procedure is given in Algorithm \ref{alg:spanning}.
    \item \textbf{Graph refinement through $k$-opt procedure:} After constructing the initial graph, this phase iteratively enhances the graph's algebraic connectivity using a $k$-opt heuristic. In this process, $k$ edges are added to the graph and an equivalent number are removed to create an improved solution with higher {\ac}. Pseudo-code for this procedure is given in Algorithm \ref{alg:kopt}.
\end{itemize} 

\subsection{Initial feasible graph construction}

The strategy for constructing an initial feasible graph depends on the specific {\ac} maximization problem variant. In the case of sparsifying pose-graphs within SLAM (as in \eqref{eq:F_1}), the approach involves augmenting the top \(q\) edges to a pre-existing connected base graph. These edges are selected based on their ranking, determined by the product \(w_{ij} \cdot (v_i - v_j)^2\), where \(v_i\) is the \(i^{th}\) component of the Fiedler vector of the base graph.

On the other hand, for the variant focusing on spanning tree construction (as in \eqref{eq:F_2}), the methodology is guided by empirical insights from \cite{somisetty2024optimal}, which suggest that spanning trees with higher {\ac} typically have a star-like topology. This topology features a central node with the highest degree, with all other nodes no more than two hops away. The central node is selected by sorting nodes in descending order based on the cumulative sum of the weights of all possible edges that can be augmented to each node. The algorithm then incrementally incorporates nodes into the graph, ensuring they remain within two hops, thereby constructing a  tree with higher {\ac}.

\begin{algorithm}
\caption{Greedy $k$-opt heuristic - Initial feasible graph}
\label{alg:spanning}
{
\begin{algorithmic}[1]

\State \textbf{Input}: Graph $(V,E,\mathbf{w})$, $q$.

\vspace*{0.05in}
\If{base graph exists} 

\vspace*{0.05in}
\State $O_e \longleftarrow$ Sort edges $\{i,j\} \in E^a$ based on the  decreasing order of $w_{ij}(v_i - v_j)^2$ values.

\vspace*{0.05in}
\State Augment top $q$ edges in $O_e$ to the base graph.

\vspace*{0.05in}
\State $\mathcal{T}_0 \longleftarrow$ Initial feasible graph satisfying constraints

\Else 

\vspace*{0.05in}
\State $\mathbf{S}[i] \longleftarrow \text{sum}(\mathbf{w}[i,:] )$

\vspace*{0.05in}
\State $O_n \longleftarrow$ Sort nodes in the decreasing order of $\mathbf{S}[i]$

\vspace*{0.05in}
\State $O_e \longleftarrow$ Sort edges $\{i,j\} \in E$ based on the  decreasing order of $w_{ij}$ values

\vspace*{0.05in}
\State Central node  $\longleftarrow O_n[1]$

\vspace*{0.05in}
\State Add the first edge from the order \(O_e\) to the graph, ensuring it includes the central node.

\vspace*{0.05in}
\State $\mathcal{T}_0 \longleftarrow$ Add edges to the graph in the order specified by $O_e$, ensuring all nodes remain within two hops of the central node until a spanning tree is formed.

\EndIf

\vspace*{0.05in}
\State Return $\mathcal{T}_0$.

\end{algorithmic}
}
\end{algorithm}

\subsection{Graph refinement through $k$-opt procedure}

After constructing an initial feasible graph, the next step is to refine it by generating a new graph within the local neighborhood of the current feasible solution. The goal is to find a new feasible solution with an increased {\ac}. This iterative process continues until no further improvements in {\ac} can be made. Consider \(\mathcal{T}_1\) and \(\mathcal{T}_2\) as two feasible graphs for the {\ac} maximization problem. We define \(\mathcal{T}_2\) as being within the \(k\)-exchange neighborhood of \(\mathcal{T}_1\) if \(\mathcal{T}_2\) can be derived by substituting \(k\) edges from \(\mathcal{T}_1\). During a \(k\)-opt exchange, the objective is to find a new feasible solution within this \(k\)-exchange neighborhood and replace the current solution if the new one has a higher {\ac}.

The \(k\)-opt exchange process involves two key steps: first, selecting a subset of \(k\) edges from the set of augmentable edges to add to the current graph, and then identifying a different subset of \(k\) edges from the updated graph for removal. This approach ensures that the resulting graph remains feasible while aiming to enhance {\ac}. Given the numerous possible combinations for selecting \(k\) edges for both addition and removal, the efficiency of this process is paramount. To address this, we introduce an effective \textit{edge ranking} strategy to create a fast algorithm that efficiently navigates the solution space.

Edges from the augmentable set are prioritized based on the product \(w_{ij} \cdot (v_i - v_j)^2\), where \(v_i\) is the \(i^{th}\) component of the Fiedler vector of the current graph. To streamline selection and improve efficiency, only the top \(m (\geqslant k)\) edges from this list are considered for inclusion in the \(k\)-opt exchange, yielding a subset of \(k\)-edge combinations outlined in step 3 of Algorithm \ref{alg:kopt}. After adding \(k\) edges, the edges in the new graph are reordered according to \(w_{ij} \cdot (v'_i - v'_j)^2\), where \(v'\) represents the Fiedler vector of the updated graph. For removing \(k\) edges, selections are made from the bottom \(m\) of this ranked list (as detailed in step 6 of Algorithm \ref{alg:kopt}), aiming for the most significant increase in {\ac}. The parameter \(m\), adjustable based on computational needs, helps develop a faster greedy algorithm by focusing on a manageable smaller subset of edge combinations.

\begin{algorithm}
\caption{Greedy $k$-opt heuristic - Graph refinement}
\label{alg:kopt}
{
\begin{algorithmic}[1]

\State \textbf{Input}: Graph $(V,E,\mathbf{w})$, $\mathcal{T}_0, k, m$.

\vspace*{0.05in}
\State $\lambda_2^{h} \longleftarrow \lambda_2(\mathbf{L}(\mathcal{T}_0))$

\vspace*{0.05in}
\State $E_{add} \longleftarrow$ Subset of $k$-edge combinations (i.e., $\binom{m}{k}$) 

\vspace*{0.05in}
\For{each edge combination in $E_{add}$}

\vspace*{0.05in}
\State Add the $k$ edges to the $\mathcal{T}_0$.

\vspace*{0.05in}
\State $E_{del} \longleftarrow$ Subset of $k$-edge combinations (i.e., $\binom{m}{k}$) 

\vspace*{0.05in}
\State $\mathcal{T}_1 \longleftarrow$  Feasible graph obtained from deleting the $k$ edges from an edge combination of $E_{del}$

\vspace*{0.05in}
\If{$\lambda_2(\mathbf{L}(\mathcal{T}_0)) > \lambda_2^{h}$}

\vspace*{0.05in}
\State $\mathcal{T}_0 \longleftarrow \mathcal{T}_1$

\vspace*{0.05in}
\State $\lambda_2^{h} \longleftarrow \lambda_2(\mathbf{L}(\mathcal{T}_1))$

\EndIf

\EndFor

\vspace*{0.05in}
\State Return $\mathcal{T}_0$ and $\lambda_2^{h}$.

\end{algorithmic}
}
\end{algorithm}

\section{COMPUTATIONAL RESULTS}
\label{sec:comp}
This section highlights the computational performances of the proposed algorithms via two detailed studies. The first study demonstrates the efficiency of the introduced Cheeger cuts in solving the algebraic connectivity maximization problem. The second study offers a comparative evaluation of the greedy $k$-opt heuristic's performance.

\subsection{Computational settings and test cases}
The proposed optimization formulations and algorithms were modeled and implemented using JuMP v1.2.0 \cite{dunning2017jump} in Julia v1.10.0. Computational analyses utilized Gurobi 11.0.1 \cite{gurobi} as the MILP solver, running on an Apple M2 Max processor with 32GB of memory (personal laptop). User-defined eigenvector and Cheeger cuts were integrated within the branch-and-cut framework via a lazy-cut callback mechanism. All datasets, including the \texttt{CSAIL} and \texttt{Intel} datasets, along with the implemented algorithms are available in the open-source Julia/JuMP package ``\textsc{LaplacianOpt}"\footnote{\label{note:lopt} \url{https://github.com/harshangrjn/LaplacianOpt.jl}}.

\subsection{MILP \eqref{eq:milp_cheeger} vs. MIQCP \eqref{eq:miqcp_cheeger} for Cheeger constant evaluation}
\label{subsec:milp_cheeger_perf}
The comparison between the MILP in \eqref{eq:milp_cheeger} and MIQCP in \eqref{eq:miqcp_cheeger} formulations, both used for evaluating the Cheeger constant of a graph with randomized weights, across different graph sizes and sparsity levels (Table \ref{tab:run_times}) reveals significant insights. A $k$-sparse graph means $k$\% of edges were removed from the complete graph. As alluded to in Remark \ref{rem:milp_region}, the MILP consistently achieved optimality in all runs. In cases where both formulations reached optimality, \textit{the MILP was, on average, 106.39 times faster than the MIQCP}, ranging from 8.28 times faster for 15-node runs to 215.82 times faster for 20-node runs. The highest \textit{speedup observed was 759.04 times} (20 nodes, 1.0 sparsity), with the MILP completing in 0.39 seconds versus MIQCP's 294.38 seconds. For robot localization datasets, the non-sparsified graph's Cheeger constant evaluation in the MILP took 2.1 seconds for \texttt{CSAIL} and 10.6 seconds for \texttt{Intel}, while the MIQCP timed out on both datasets. This substantial speedup is critical for practical applications, where the cheeger constant of a graph is repeatedly evaluated within the separation oracle of Algorithm \ref{alg:cheeger}'s framework.

\begin{table}[h!]
    \footnotesize
    \centering
    \begin{tabular}{c|c|ccc|ccc}
    \toprule
    \textbf{$n$} & \textbf{Graph} & \multicolumn{3}{c|}{\textbf{MILP - \eqref{eq:milp_cheeger} (sec)}} & \multicolumn{3}{c}{\textbf{MIQCP -- \eqref{eq:miqcp_cheeger} (sec)}} \\ \cline{3-8} 
     & \textbf{sparsity} & \textbf{Min.} & \textbf{Mean} & \textbf{Max.} & \textbf{Min.} & \textbf{Mean} & \textbf{Max.} \\ 
    \hline
    \multirow{4}{*}{{15}} 
     & 0.4 & 0.01 & 0.05 & 0.08 & 0.03 & 0.09 & 0.16 \\ 
     & 0.6 & 0.04 & 0.13 & 0.18 & 0.09 & 0.60 & 1.26 \\ 
     & 0.8 & 0.09 & 0.13 & 0.18 & 0.59 & 1.52 & 3.82 \\ 
     & 1.0 & 0.12 & 0.15 & 0.20 & 0.89 & 2.22 & 5.48 \\ 
    \hline
    \multirow{4}{*}{{20}} 
     & 0.4 & 0.05 & 0.14 & 0.27 & 0.64 & 1.17 & 1.74 \\  
     & 0.6 & 0.10 & 0.22 & 0.37 & 1.45 & 22.21 & 84.87 \\ 
     & 0.8 & 0.23 & 0.42 & 0.64 & 16.12 & 115.77 & 215.39 \\ 
     & 1.0 & 0.28 & 0.56 & 0.86 & 14.96 & 268.49 & 456.20 \\ 
     \hline
    \multirow{4}{*}{{25}} 
     & 0.4 & 0.02 & 0.25 & 0.53 & 0.88 & 16.74 & 47.00 \\
     & 0.6 & 0.40 & 0.97 & 1.65 & \multicolumn{3}{c}{Time Limit} \\  
     & 0.8 & 1.10 & 2.00 & 3.39 & \multicolumn{3}{c}{Time Limit} \\ 
     & 1.0 & 1.17 & 2.31 & 4.15 & \multicolumn{3}{c}{Time Limit} \\ 
    \bottomrule
    \end{tabular}
    \caption{Comparison of MILP and MIQCP run times across different node sizes and sparsity levels where the runs did not time out. Time Limit was set to 500 sec, with statistics based on 50 randomized instances.}
    \label{tab:run_times}
    \vspace{-0.1cm}
\end{table}

\subsection{Algorithm \ref{alg:cheeger}'s performance in finding optimal trees}
\label{subsec:alg_perf}

As discussed in Section \ref{sec:cheeger}, integrating Cheeger cuts with eigenvector cuts enhances computational efficiency in solving MISDPs ($\mathcal{F}_1$) via OA. Table \ref{tab:cheeger_8_10_12} compares the average run times required to reach optimal spanning tree solutions with Algorithm \ref{alg:cheeger}, with and without Cheeger cuts. Here, \(t_1\) represents run times without Cheeger cuts, while \(t_2\) and \(t_3\) indicate run times with Cheeger cuts for factors of 0.8 and 1.0, respectively. Additionally, \(t_i^c\) specifies the time taken to compute the Cheeger constant for each factor, highlighting the efficiency gains from this approach.

\begin{table}[ht!]
\centering
\caption{Comparison of average run times: Algorithm \ref{alg:cheeger}  with versus without Cheeger cuts to achieve optimal spanning trees across instances with 8, 10, and 12 nodes. Time Limit was set to 3600 sec.}
\label{tab:cheeger_8_10_12}
\resizebox{\columnwidth}{!}{%
\begin{tabular}{c c c c c c} 
\toprule
\multirow{2}{*}{$n$}  & {w/o Cheeger cuts}  & \multicolumn{2}{c}{$c_f = \frac{0.8 \cdot \phi(\widehat{G})}{\lambda_2(\widehat{G})}$} & \multicolumn{2}{c}{$c_f = \frac{\phi(\widehat{G})}{\lambda_2(\widehat{G})}$} \\ 
\cmidrule(lr){3-4} \cmidrule(lr){5-6}
 & $t_1$ (sec) & $t_2$ (sec) & $t_2^c$ (sec) & $t_3$ (sec) & $t_3^c$ (sec) \\
\midrule
8  & 1.4    &   0.6   &  0.2   &  0.3    &  0.1  \\[1ex]
10 & 502.3  &   24.1  &  1.0   &   4.5   & 0.6  \\[1ex]
12 & Time Limit     & 1752.2 & 15.3 & 160.5 & 2.5 \\
\bottomrule
\end{tabular}%
}
\end{table}

The run times (\(t_1\)) for solving 8- and 10-node networks without Cheeger cuts show a noticeable increase in computational complexity as the network size grows. Run times for 12-node networks without Cheeger cuts remain unreported, highlighting the challenges of larger networks and the need for more efficient optimization strategies. Incorporating Cheeger cuts significantly reduces run times, especially for 10- and 12-node instances, with higher Cheeger factors (0.8 to 1.0) further reducing times. This improvement demonstrates the efficacy of Cheeger cuts in enhancing the computational efficiency of Algorithm \ref{alg:cheeger} for larger networks, particularly in cooperative localization. Nevertheless, achieving optimal solutions for the MISDP using both eigenvector and Cheeger cuts remains intractable for larger networks, as encountered in the pose-graph sparsification problem.

\subsection{Performance of the $k$-opt heuristic}
In this section, we evaluate the performance of the $k$-opt heuristic, as introduced in Section \ref{sec:heuristic}, for the two variants outlined in Section \ref{Subsec:F1}. Fig. \ref{fig:spanningtree_k-opt} illustrates the comparison of average gap percentages and run times for spanning tree solutions obtained using 3-opt, 2-opt, and 1-opt heuristics across different problem sizes. These heuristic solutions serve as lower bounds, aiding the application of Cheeger cuts in solving the MISDP for the formulation \(\mathcal{F}_1\). As the optimal solutions are unknown, the gaps relative to the 3-opt solutions are used as a reference for comparison in Fig. \ref{fig:spanningtree_k-opt}.


\begin{filecontents*}{data_kopt.csv}
Nodes,Gap_3opt,Runtime_3opt,Gap_2opt,Runtime_2opt,Gap_1opt,Runtime_1opt
15,0,14.2,0,4.8,2.9,0.1
25,0,27.0,4.6,18.2,9.4,1.0
40,0,43.8,8.9,24.6,12.9,7.4
60,0,153.7,0.7,89.8,7.4,28.9
\end{filecontents*}

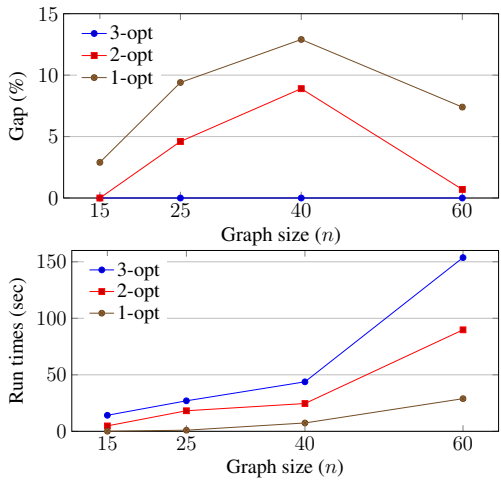
\begin{figure}
\centering
\resizebox{0.75\columnwidth}{!}{
\begin{tikzpicture}
\begin{axis}[
    xlabel={Graph size ($n$)},
    xlabel style={font=\Large},
    ylabel={Gap (\%)},
    ylabel style={font=\Large},
    title style={font=\Large},
    tick label style={font=\Large},
    xtick={15, 25, 40, 60},
    legend pos=north west,
    legend style={draw=none, font=\Large},
    ymajorgrids,
    width=12cm,
    height=6cm,
    ymin=0,
    ymax=15
]

\addplot table [x=Nodes, y=Gap_3opt, col sep=comma] {data_kopt.csv};
\addplot table [x=Nodes, y=Gap_2opt, col sep=comma] {data_kopt.csv};
\addplot table [x=Nodes, y=Gap_1opt, col sep=comma] {data_kopt.csv};

\legend{3-opt, 2-opt, 1-opt}

\end{axis}
\end{tikzpicture}}
\resizebox{0.75\columnwidth}{!}{
\begin{tikzpicture}
\begin{axis}[
    xlabel={Graph size ($n$)},
    xlabel style={font=\Large},
    ylabel style={font=\Large},
    ylabel={Run times (sec)},
    title style={font=\Large},
    tick label style={font=\Large},
    xtick={15, 25, 40, 60},
    legend pos=north west,
    legend style={draw=none,font=\Large},
    ymajorgrids,
    width=12cm,
    height=6cm,
    ymin=0,
    ymax=160
]

\addplot table [x=Nodes, y=Runtime_3opt, col sep=comma] {data_kopt.csv};
\addplot table [x=Nodes, y=Runtime_2opt, col sep=comma] {data_kopt.csv};
\addplot table [x=Nodes, y=Runtime_1opt, col sep=comma] {data_kopt.csv};

\legend{3-opt, 2-opt, 1-opt}

\end{axis}
\end{tikzpicture}%
}
\caption{Comparative analysis of heuristics (3-opt, 2-opt, and 1-opt) with $m=20$ for spanning tree solutions across different network sizes ($n$). The gap and run times are averaged over fifty random instances for each $n$.}
\label{fig:spanningtree_k-opt}

\end{figure}

In addition, we present a comparative analysis of the $k$-opt heuristic against the MAC algorithm \cite{doherty2022spectral} across different datasets. This comparison includes results on randomly generated synthetic datasets (Fig. \ref{fig:k-opt_MAC}) and real-world benchmark datasets, as shown in Fig. \ref{fig:CSAIL} and Table \ref{tab:so(d)}.

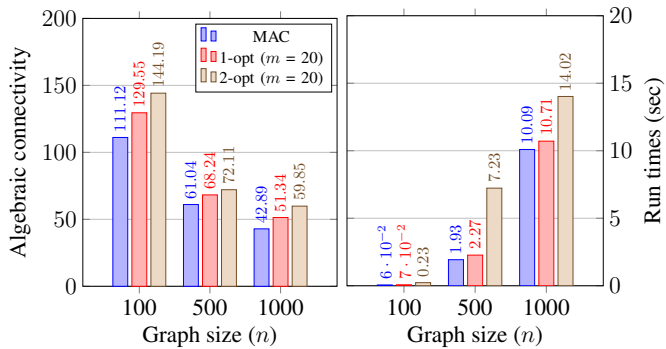
\begin{figure}
    \centering
    \resizebox{0.5\columnwidth}{!}{%
\begin{tikzpicture}
\begin{axis}[
    x tick label style={/pgf/number format/1000 sep=},
    ybar=2.5pt,
    bar width=9pt,
    x=1.5cm,
    ymin=0,
    ymax=200,
    xtick=data,
    enlarge x limits=0.4,
    ylabel={Algebraic connectivity},
    xlabel={Graph size ($n$)},
    xticklabels={{100}, {500}, {1000}},
    ymajorgrids,
    nodes near coords,
    nodes near coords align={vertical},
    every node near coord/.append style={rotate=90, anchor=west},
    label style={font=\Large},
    tick label style={font=\Large}  
    ]
\addplot coordinates {( 1,111.12	) ( 2,61.04	)  (3, 42.89)};
\addplot  coordinates {( 1,129.55) ( 2,	68.24	) ( 3,51.34	) };
\addplot  coordinates {( 1,144.19) ( 2,	72.11	) ( 3,59.85	) };
\legend{MAC, 1-opt ($m$ = 20), 2-opt ($m$ = 20)}
\end{axis}
\end{tikzpicture}%
}
\resizebox{0.485\columnwidth}{!}{%
\begin{tikzpicture}
\begin{axis}[
    x tick label style={/pgf/number format/1000 sep=},
    ybar=2.5pt,
    bar width=9pt,
    yticklabel pos=right,
    x=1.5cm,
    ymin=0,
    ymax=20,
    xtick=data,
    enlarge x limits=0.4,
    ylabel={Run times (sec)},
    xlabel={Graph size ($n$)},
    xticklabels={{100}, {500}, {1000}},
    ymajorgrids,
    nodes near coords,
    nodes near coords align={vertical}, 
    every node near coord/.append style={rotate=90, anchor=west}, 
    label style={font=\Large},
    tick label style={font=\Large}  
    ]
    \addplot coordinates {( 1,0.06	) ( 2,1.93	)  (3, 10.09)};
\addplot  coordinates {( 1,0.07) ( 2,	2.27	) ( 3,10.71	) };
\addplot  coordinates {( 1,0.23) ( 2,	7.23	) ( 3,14.02	) };
\end{axis}
\end{tikzpicture}%
}
\caption{Comparison of the greedy 1-opt and 2-opt heuristics with parameter $m = 20$ against the MAC algorithm \cite{doherty2022spectral} for different graph sizes. The left graph displays the algebraic connectivity, while the right graph illustrates the run times.}
\label{fig:k-opt_MAC}
\vspace{-0.3cm}
\end{figure}

\begin{figure*}[ht]
  \centering
  \begin{subfigure}{.24\linewidth}
    \centering
    \includegraphics[width=1.0\linewidth]{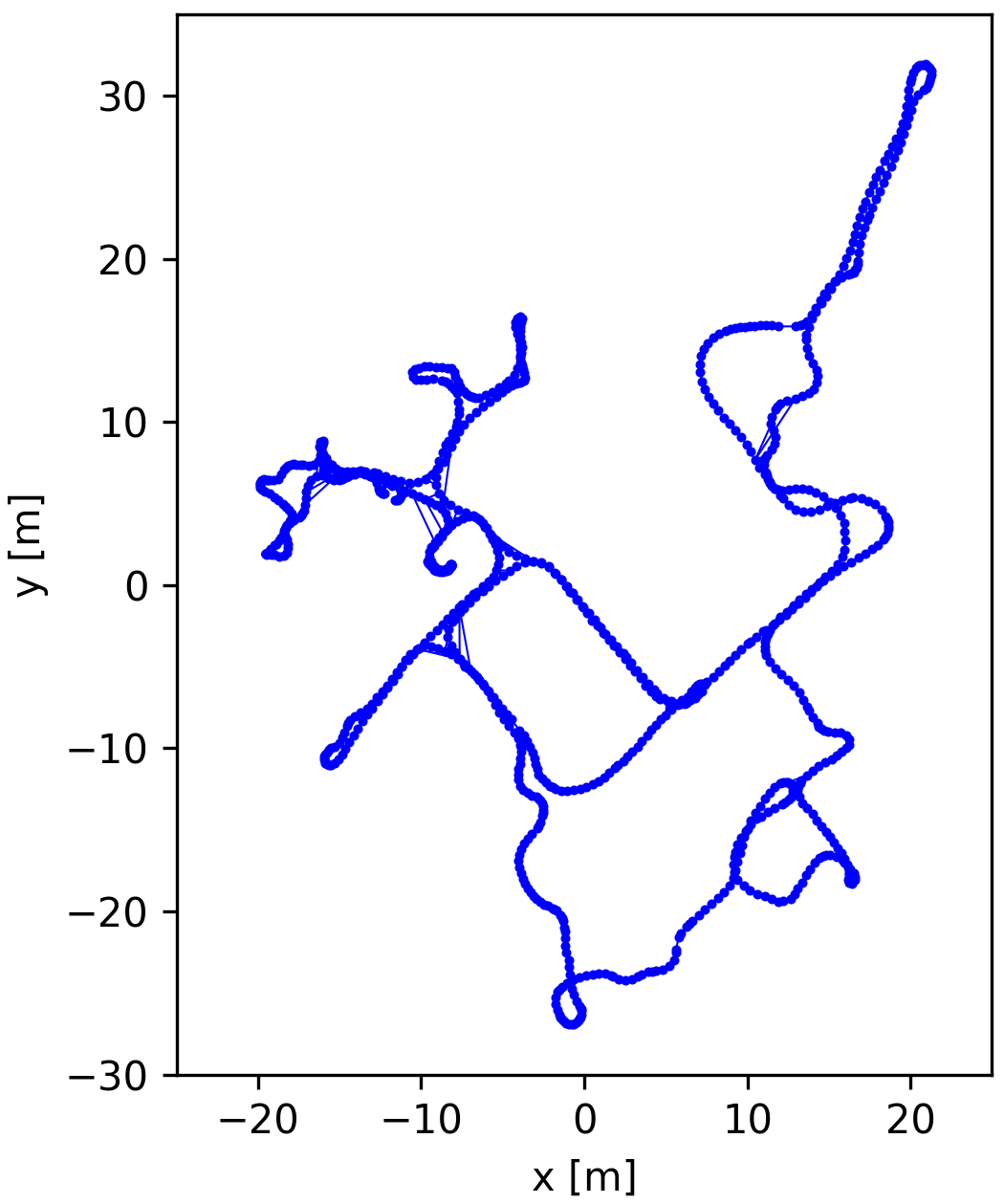}
    \caption{Original graph}
    \label{fig:sub1}
  \end{subfigure}%
  \hfill 
  \begin{subfigure}{.24\linewidth}
    \centering
    \includegraphics[width=1.0\linewidth]{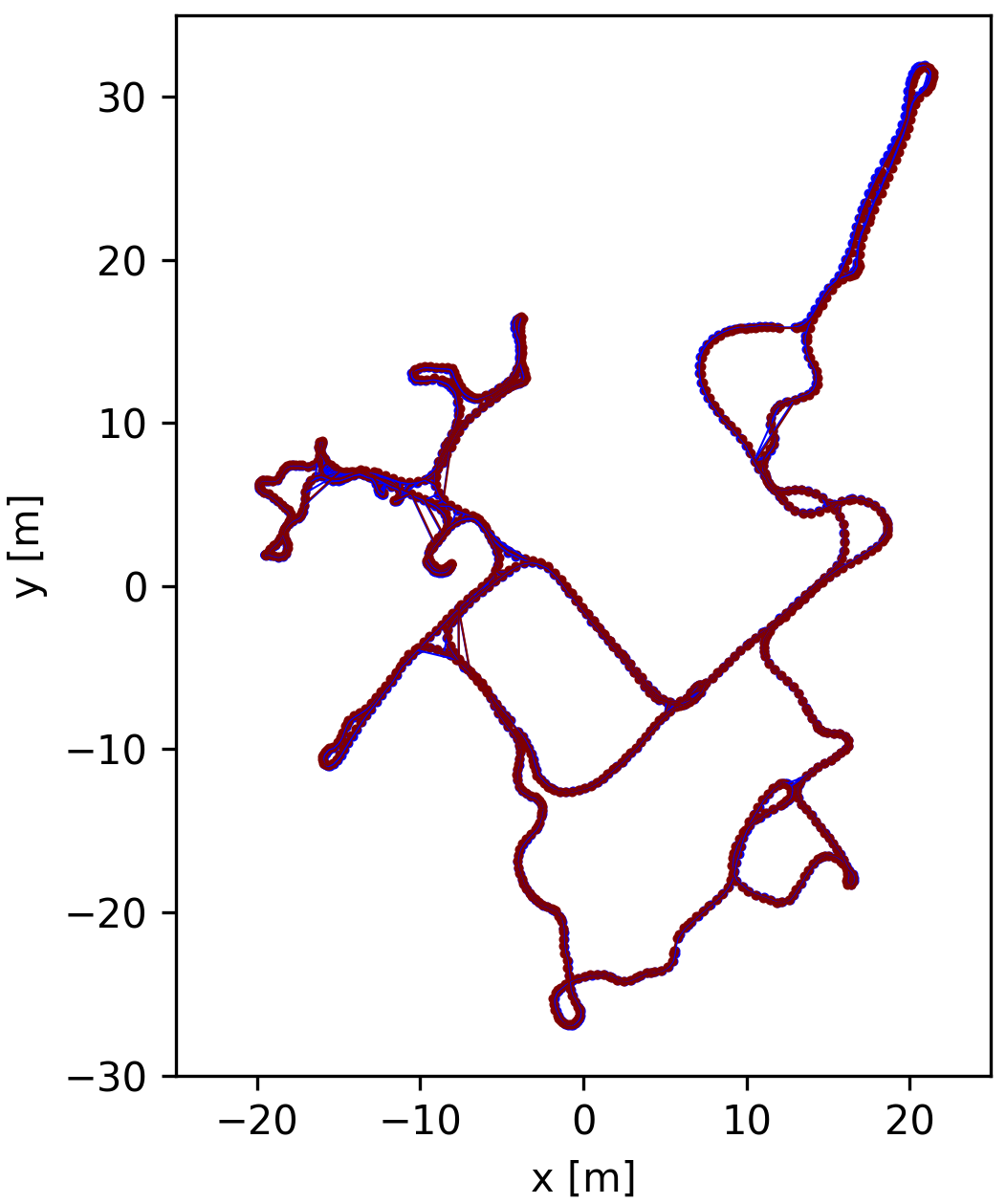}
    \caption{1-opt heuristic algorithm}
    \label{fig:sub2}
  \end{subfigure}%
  \hfill 
  \begin{subfigure}{.24\linewidth}
    \centering
    \includegraphics[width=1.0\linewidth]{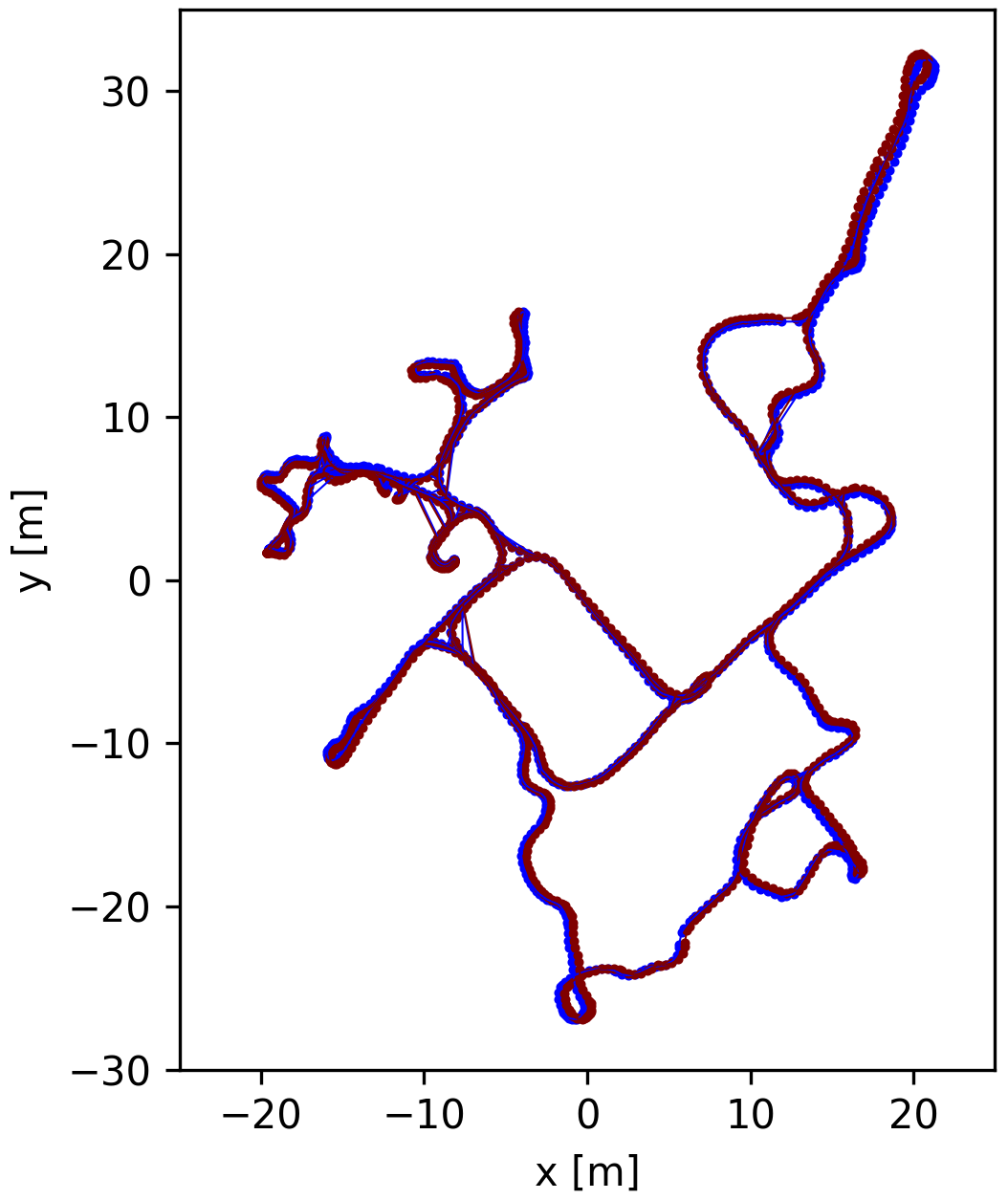}
    \caption{MAC algorithm}
    \label{fig:sub3}
  \end{subfigure}
  \begin{subfigure}{.24\linewidth}
    \centering
    \includegraphics[width=1.0\linewidth]{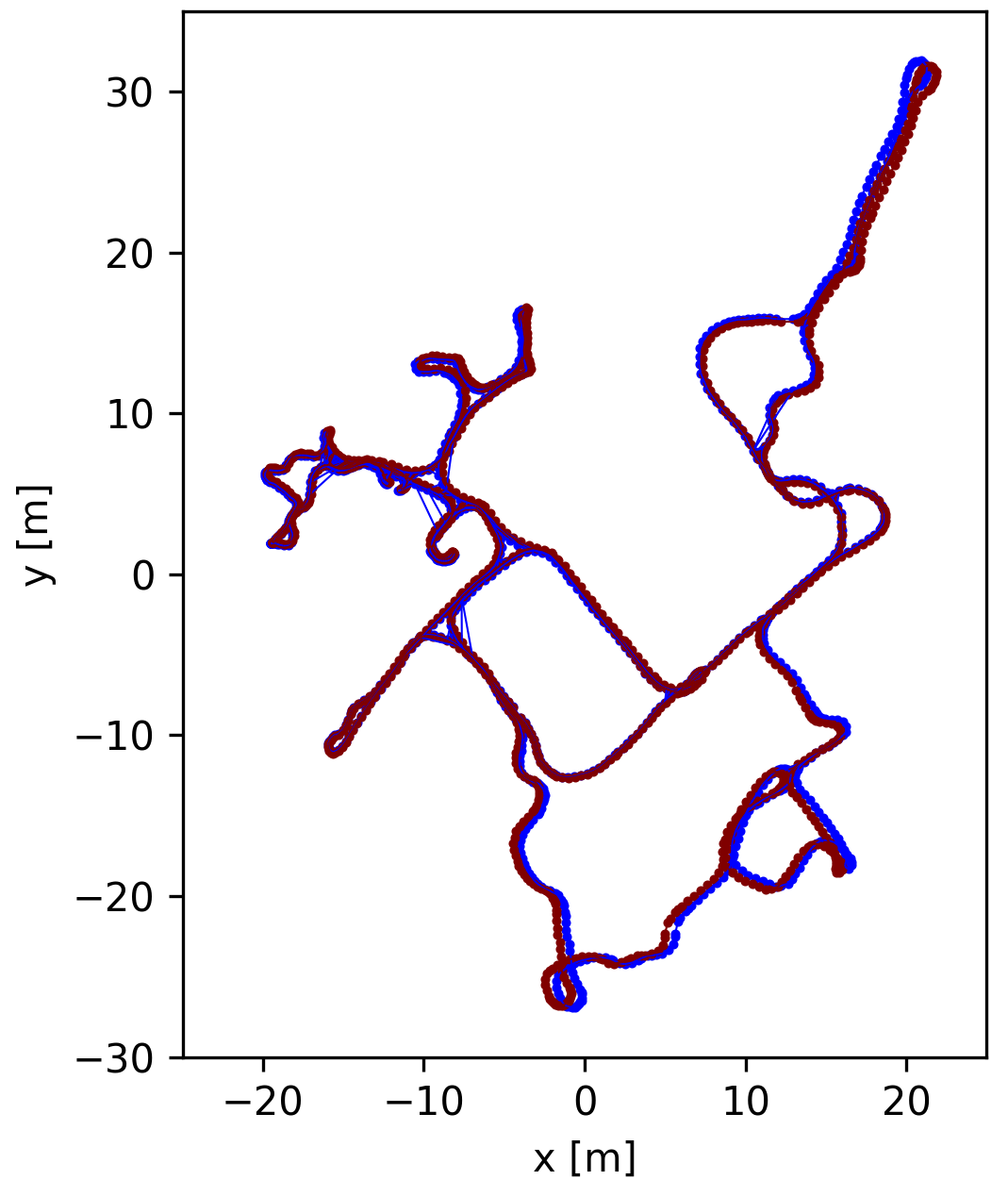}
    \caption{Random sparsification}
    \label{fig:sub4}
  \end{subfigure}
  \caption{Pose-graph optimization for the \texttt{CSAIL} dataset. The sequence shows: (a) the original graph with all measurements; (b) 20\% edge sparsification using greedy 1-opt heuristic to maximize {\ac} with $m=30$; (c) 20\% edge sparsification using the MAC algorithm \cite{doherty2022spectral}; and (d) 20\% edge sparsification through random edge selection. In (b), (c), and (d), the sparsified networks (red) are superimposed on the original graph (blue), facilitating a visual comparison of the sparsification effects.}
  \label{fig:CSAIL}
\end{figure*}
Fig. \ref{fig:k-opt_MAC} compares synthetic instances by graph size, number of loop closure edges, and augmentation budget, evaluating {\ac} and run times for 1-opt and 2-opt heuristics against the MAC algorithm \cite{doherty2022spectral}. On average, the proposed greedy heuristics outperform the MAC algorithm across all network sizes. Notably, the 1-opt heuristic exhibits comparable run times to MAC's, whereas 2-opt, albeit slightly more resource-intensive, achieves better outcomes.

Table \ref{tab:so(d)} compares the SO$(d)$ orbit distances for networks sparsified using the 1-opt heuristic and the MAC algorithm \cite{doherty2022spectral}. SO$(d)$ orbit distance measures variations in rotational states between the Maximum-Likelihood Estimator (MLE) of the sparsified and original problems, calculable via singular value decomposition. A smaller orbit distance indicates a closer approximation of the sparsified graph's MLE, relative to the original, reflecting a higher-quality solution \cite{doherty2022spectral}.

This analytical framework highlights the effectiveness of the $k$-opt heuristic in generating higher-quality sparsified graphs, as visually demonstrated in Fig. \ref{fig:CSAIL}. In the figure, sparsified networks (in red) produced by both the 1-opt heuristic and the MAC algorithm \cite{doherty2022spectral} are overlaid on the original network (in blue) from Fig. \ref{fig:sub1}. A comparison reveals a more pronounced blue presence in Fig. \ref{fig:sub3}, indicating that the MAC algorithm's sparsified network deviates more from the original graph, whereas the $k$-opt heuristic's sparsification maintains closer fidelity to the original network structure.

\begin{table}[ht!]
\centering
\caption{SO$(d)$ orbit distance comparison for benchmark datasets between the 1-opt heuristic with $m=30$ and the MAC algorithm \cite{doherty2022spectral} under various sparsification levels.}
\resizebox{0.9\columnwidth}{!}{%
    \begin{tabular}{l c c c c} 
    \toprule
    \textbf{Dataset} & \textbf{Edges added} (\%) & \textbf{1-opt }($m=30$) & \textbf{ MAC}  & \textbf{Random}\\
    \midrule
    \texttt{Intel} & 25 & 0.193 & 0.221 & 0.435 \\
    $n = 1728$ & 50 & 0.164 & 0.171 & 0.318 \\
    $|E^a| = 400$ & 75 & 0.059 & 0.063 & 0.155\\
    \midrule
    \texttt{CSAIL} & 25 & 0.161 & 0.162 & 0.581\\
    $n = 1045$ & 50 & 0.099 & 0.102 & 0.316\\
    $|E^a| = 128$ & 75 & 0.040 & 0.041 & 0.044\\
    \bottomrule
    \end{tabular}%
}
\label{tab:so(d)}
\end{table}
\vspace{-0.5em}

\section{CONCLUSIONS}

This paper studies the optimization of edge-weighted networks by maximizing algebraic connectivity, a key factor for improving network robustness and precision in robotic network localization. Given the NP-hard nature of this problem, we introduce novel Cheeger cuts to strengthen the mixed-integer semi-definite program (MISDP) formulation, utilizing Cheeger's inequality, which involves repeated evaluation of the Cheeger constant for given graphs. We propose an MI linear program (MILP) for efficiently computing the Cheeger constant, achieving an average speedup of 106.39 times compared to solving the standard MIQCP variant. The integration of Cheeger cuts with eigenvector cuts significantly enhances computational efficiency, enabling optimal solution evaluation for up to 10-node networks and yielding high-quality solutions with reduced computational times for larger networks. Additionally, we introduce a greedy $k$-opt heuristic for quickly sparsifying graphs in cooperative localization and pose-graph SLAM applications, which also serve as valid lower bounds within the Cheeger cuts framework, enhancing scalability.

\label{sec:conc}

\bibliographystyle{IEEEtran}
\bibliography{References}

\end{document}